\newtheorem{definition}{Definition}
\newtheorem{theorem}{Theorem}
\author[1]{Mustafa Al-Bassam}
\affil[1]{Department of Computer Science\authorcr University College London\authorcr \protect\url{m.albassam@cs.ucl.ac.uk}}
\date{}
\begin{document}
\newcommand\mustafa[1]{}
\newcommand\alberto[1]{}

\newcommand{\publication}[1]{\noindent\vspace*{-1em}\raisebox{20pc}[0pt][0pt]{\hspace*{-0pt}\noindent\parbox[t]{6.5in}{\sl{#1}}}}

\newcommand\sysname{LazyLedger\xspace}

\newcommand{\cf}{cf.\@\xspace}
\newcommand{\vs}{vs.\@\xspace}
\newcommand{\etc}{etc.\@\xspace}
\newcommand{\ala}{ala\@\xspace}
\newcommand{\wrt}{w.r.t.\@\xspace}
\newcommand{\etal}{\textit{et al.}\@\xspace}
\newcommand{\eg}{\textit{e.g.,}\@\xspace}
\newcommand{\ie}{\textit{i.e.,}\@\xspace}
\def\first{({\it i})\xspace}
\def\second{({\it ii})\xspace}
\def\third{({\it iii})\xspace}
\def\fourth{({\it iv})\xspace}
\def\fifth{({\it v})\xspace}
\def\sixth{({\it vi})\xspace}

    \title{\sysname: A Distributed Data Availability Ledger With Client-Side Smart Contracts}
    \maketitle
    \publication{This document is a draft; feedback is welcome.}

    \begin{abstract}
        We propose \sysname, a design for distributed ledgers where the blockchain is optimised for solely ordering and guaranteeing the availability of transaction data. Responsibility for executing and validating transactions is shifted to only the clients that have an interest in specific transactions relating to blockchain applications that they use. As the core function of the consensus system of a distributed ledger is to order transactions and ensure their availability, consensus participants do not necessarily need to be concerned with the contents of those transactions. This reduces the problem of block verification to data availability verification, which can be achieved probabilistically with sub-linear complexity, without downloading the whole block. The amount of resources required to reach consensus can thus be minimised, as transaction validity rules can be decoupled from consensus rules. We also implement and evaluate several example \sysname applications, and validate that the workload of clients of specific applications does not significantly increase when the workload of other applications that use the same chain increase.
    \end{abstract}

    \section{Introduction}
So far, blockchain-based distributed ledger platforms such as Bitcoin \cite{nakamoto2008} and Ethereum \cite{buterin2013} have adopted similar consensus design paradigms, where the validity of the blocks proposed by block producers is determined by \first whether it is the block producer's turn to propose a block and \second whether the transactions in the block are valid according to some state machine. Traditional consensus protocols such as Practical Byzantine Fault Tolerance \cite{castro1999} have also taken a similar approach, where consensus nodes (replicas) process transactions according to a state machine.

The scalability issues that have plagued decentralised blockchains \cite{croman2016} can be attributed to the fact that in order to run a node that validates the blockchain, the node must download, process and validate every transaction included in the chain. As a result, various scalability efforts have emerged including on-chain scaling via sharding \cite{al-bassam2018sharding, kokoris-kogias2018}, which aims to split the state of the blockchain into multiple shards so that transactions can be processed by different consensus groups in parallel, and off-chain scaling via state channels \cite{poon2016, miller2017}, which takes the approach of moving transactions off-chain and using the blockchain as a settlement layer.
    
However, it is also worth exploring alternative blockchain design paradigms that may be suitable for different types of applications, where nodes that need to validate the blockchain in order to determine the correct chain do not need to validate the contents of the blocks. Instead, the end-users of applications that store information on the blockchain can be concerned with the validation of such contents. This would remove the bottleneck where nodes need to validate everyone else's transactions, and reducing the problem of validating the blockchain to simply verifying that the contents of the block are available (the data availability problem \cite{al-bassam2018}), so that end-users can meaningfully access the information needed to apply state transitions on their applications. In such a paradigm, the blockchain is used solely for ordering and making available messages, rather than executing and verifying the state machine transitions of transactions. Because messages for applications are executed by end-users off-chain, the logic of these applications does not need to be defined on-chain, and thus application logic can be written in any programming language or environment, and changing the logic does not require a hard-fork of the chain. \mustafa{Sell this more by referencing DAO hack etc?}

A result of reducing blockchain validation to the data availability problem is that one can fully achieve consensus on new messages without downloading the entire set of messages, using probabilistic data availability verification techniques \cite{al-bassam2018}, as consensus participants do not need to process the contents of messages.

Philosophically, \sysname can be thought of as a system of `virtual' sidechains \cite{back2014} that live on the same chain, in the sense that transactions associated with each application only need to be processed by users of those applications, similar to the fact that only users of a specific sidechain need to process transactions of that sidechain. However, because all applications in \sysname share the same chain, the availability of the data of all their transactions are equally and uniformly guaranteed by the same consensus group, unlike in traditional sidechains where each sidechain may have a different (smaller) consensus group.

In this paper, we make the following contributions:

\begin{itemize}
    \item We design a blockchain, \sysname, where consensus and transaction validity is decoupled, and describe two alternative block validity rules which just ensure that block data is available. One is a simple rule where nodes simply download the blocks themselves, and the other is probabilistic but more efficient as nodes do not need to download entire blocks.
    \item We build an application-layer on top of our proposed blockchain, where end-user clients can efficiently query the network for data relating only to their applications, and only need to execute transactions related to their applications.
    \item We implement and evaluate several example \sysname applications; including a currency, a name registrar and a petitions system.
\end{itemize}

\noindent\textbf{Outline}: \Cref{sec:background} presents a background of the technical concepts \sysname relies on; \Cref{sec:model} presents the \sysname threat model, node types, and blockchain model; \Cref{sec:block-validity-rule} presents the block validity rules of the \sysname blockchain; \Cref{sec:application-model} presents the \sysname application model and how applications can be built on top of its blockchain; \Cref{sec:implementation} presents an evaluation of a prototype of \sysname and some example applications; \Cref{sec:related-work} presents a comparison with related work; and \Cref{sec:conclusion} concludes.
    \section{Background}\label{sec:background}

\subsection{Blockchains}

The data structure of a blockchain consists of a chain of blocks. Each block contains two components: a header and a list of transactions. In addition to other metadata, the header stores at minimum the hash of the previous block (thus enabling the chain property), and the root of the Merkle tree that consists of all transactions in the block.
    
Blockchain networks implement a consensus algorithm \cite{bano2017} to determine which chain should be favoured in the event of a fork, \eg if proof-of-work \cite{nakamoto2008} is used, then the chain with the most accumulated work is favoured. They also have a set of transaction validity rules that dictate which transactions are valid, and thus blocks that contain invalid transactions will never be favoured by the consensus algorthim and should in fact always be rejected.
    
Full nodes (also known as `fully validating nodes') are nodes which download both the block headers as well as the list of transactions, verifying that all transactions are valid according to some transaction validity rules. This is necessary in order to know which blocks have been accepted by the consensus algorithm.
    
There are also `light' clients which only download block headers, and assume that the list of transactions are valid according to the transaction validity rules. These nodes verify blocks against the consensus rules, but not the transaction validity rules, and thus assume that the consensus is honest in that they only included valid transactions. They therefore do not fully execute the consensus algorithm to know which blocks are accepted, and may end up in a situation where they accept blocks that contain invalid transactions, that full nodes have rejected.

\subsection{Sampling-Based Data Availability}\label{sec:data-availability}

The `data availability problem' asks how a client--such as a light client--that only downloads block headers, but not the corresponding block data (\eg list of transactions), can satisfy itself that the block data is not being withheld by the producer of the block (\eg a miner), and that the full data is indeed available to the network.

Al-Bassam \etal \cite{al-bassam2018} propose a solution to this problem based on erasure coding and random sampling. The solution was proposed in the context of state transition fraud proofs, however it is of independent interest. We summarise the scheme here.

Erasure codes are error-correcting codes \cite{peterson1972} that can transform a message consisting of $k$ shares (\ie pieces) into a bigger extended message of $n$ shares, such that the original message can be recovered by any subset $k'$ of the $n$ shares. The ratio $\frac{k'}{n}$ (the code rate) depends on the erasure code used and its parameters. For example, Reed-Solomon erasure codes \cite{wicker1994} can support $k' = \frac{n}{2}$, which means that only half of the erasure coded data is needed to recover the original data.

To allow clients to be sure that block data is available, block headers contain a commitment to the root of the Merkle tree of the erasure coded version of the data. In order for an adversarial block producer to withhold any part of the block, they must withhold at least $k'$ out of $n$ shares of the block (\eg with standard Reed-Solomon coding this would be at least 50\% of the block). Clients can then sample multiple random shares from the block, and if it does not receive a response for one of its samples because it is unavailable, then it considers the whole block to be unavailable, and does not accept the block. If an adversarial block producer has withheld $k'$ out of $n$ shares, then there is a high probability that the client will sample an unavailable piece and reject the block.

However because the block producer may incorrectly compute the erasure code or the Merkle tree, thus making the block data unrecoverable, it is necessary to allow clients to receive `fraud proofs' from full nodes to alert them that the erasure code is incorrect, causing the client to download the block data, recompute the erasure code and verify that it does not match the Merkle root.

To prevent clients from needing to download the entire block data (which would defeat the goal of data availability proofs being more efficient than downloading the whole data yourself), two-dimensional erasure coding is used, which limits these fraud proofs to a specific axis as only one row or column needs to be downloaded to prove that the erasure code is incorrectly computed, thus the fraud proof size would be approximately $O(\sqrt{n})$ (without Merkle proofs) for a block with $n$ shares. However this also requires clients who want data availability guarantees to download a Merkle root for the shares in each row and column as part of the block header, rather than a single Merkle root for the entire data, thus the number of Merkle roots that need to be downloaded increases from $1$ to $2\sqrt{n}$, as there are $\sqrt{n}$ rows and $\sqrt{n}$ columns.

Importantly, that for this scheme to provide any guarantees, there must be a minimum number of clients in the network that are making enough samples to force the block producer to release more than $k'$ shares to satisfy all of those samples, as if less than $k'$ shares are released, the block data may not be recoverable from those shares. This is because clients gossip downloaded samples to `full' nodes that request it, so that they can recover the full block with enough shares, similar to a peer-to-peer file-sharing network.

    \section{Model}\label{sec:model}

\subsection{Threat Model and Nodes}

We consider the following types of nodes in \sysname:
\begin{itemize}
    \item \textbf{Consensus nodes.} These are nodes which participate in the consensus set, to decide which blocks should be added to the chain.
    \item \textbf{Storage nodes.} These are nodes which store a copy of all of the data in the blockchain and its blocks.
    \item \textbf{Client nodes.} These are effectively the end-users of the blockchain system. They participate in applications or use cases that use the blockchain, and receive transaction data or messages from storage nodes relevant to their applications.
\end{itemize}

These nodes are all connected to each other in a peer-to-peer network, \eg all node types may have some connections with any other node type and the topology of the network is non-hierarchical. However, client nodes may wish to ensure that they are connected to at least one storage node if they wish to utilise their services.

We assume that honest nodes not under an eclipse attack \cite{heilman2015} and are thus connected to at least one other honest node; that is, a node that will follow the protocols described in \Cref{sec:block-validity-rule} and relay messages. This implies that the network is not split, so that there is always a network path between two honest nodes. Additionally, there is at least one honest storage node in the network. \mustafa{Can we relax the assumption that the network is not split? \alberto{Maybe Sabre~\cite{sabre} can help you with that; \eg assume the presence of Sabre nodes.}}

We also assume that there is a maximum network delay $\delta$ so that if an honest node can receive a message from the network at time $T$, then any other honest node can also do so at time $T' \leq T + \delta$. \alberto{Is this the same assumption as FraudProof? If it is, you may mention it to make it clear that the data availability proof from FraudProof apply also here (since they work under the same assumption).}

\subsection{Block Model}

We assume a blockchain data structure that at minimum consists of a hash-based chain of block headers $H = (h_0, h_1, ...)$. Each block header $h_i$ contains the root $\mathsf{mRoot}_i$ of a Merkle tree of a list of messages $M_i = (m_i^0, m_i^1, ...)$, such that given a function $\mathsf{root}(M)$ that returns the Merkle root of a list of messages $M$, then $\mathsf{root}(M_i) = \mathsf{mRoot}_i$. This is not an ordinary Merkle tree, but an ordered Merkle tree we refer to as a `namespaced' Merkle tree which we describe in \Cref{sec:namespaced-merkle-tree}. A block header $h_i$ is considered to be valid if given some boolean function
\begin{equation*}
    \mathsf{blockValid}(h) \in \{\mathsf{true}, \mathsf{false}\}
\end{equation*}
then $\mathsf{blockValid}(h_i)$ must return $\mathsf{true}$.

If a block is valid, then it has the potential to be included in the blockchain. We assume that the blockchain has some consensus rules to decide which valid blocks have consensus to be included in the blockchain, and resolve forks. A block header $h_i$ is considered to have consensus if given some boolean function
\begin{equation*}
    \mathsf{inChain}(h, \{H_0, H_1, ...\}) \in \{\mathsf{true}, \mathsf{false}\}
\end{equation*}
then $\mathsf{inChain}(h_i, \{H_0, H_1, ...\})$ must return $\mathsf{true}$, where each $H_j$ is a distinct chain of block headers and $\{H_0, H_1, ...\}$ is the set of distinct chains observed (there may be multiple in the event of a fork).

Note that computing $\mathsf{inChain}$ on $h_i$ can only return $\mathsf{true}$ if and only if $\mathsf{blockValid}(h_i)$ returns $\mathsf{true}$, regardless of the forks to pick from. \alberto{This means that \texttt{blockValid($h_i$)=$true$} is a necessary condition for having \texttt{inChain($h_i$)=$true$}, but it is not a sufficient condition; right? In that case, should you prove this?} Apart from this constraint, the specific consensus rules used by $\mathsf{inChain}$ are arbitrary and are out of scope for the design of \sysname. For example, $\mathsf{inChain}$ may use proof-of-work or proof-of-stake with the longest chain rule \cite{nakamoto2008, bano2017}.

\subsection{Goals}\label{sec:goals}

With this threat model in mind, \sysname has the following goals:

In the text below, `messages relevant to the application' means messages that are necessary to compute the state of an application, and is discussed in more depth in \Cref{sec:cross-application}.

\begin{enumerate}
    \item \textbf{Availability-only block validity.} The result of $\mathsf{blockValid}(h_i)$ should be $\mathsf{true}$ if the data behind $\mathsf{mRoot}_i$ is available to the network. This consequently means that consensus nodes should not need to execute messages in blocks. \alberto{Just to be sure, would $\mathsf{blockValid}(h_i)$ return $false$ if the data are not available? (or would it return $\perp$?)}
    \item \textbf{Application message retrieval partitioning.} Client nodes must be able to download all of the messages relevant to the applications they use from storage nodes, without needing to downloading any messages for other applications. \alberto{I guess that clients will also have to download some "extra" data in order to validate/find those messages (eg. merkle proof); should you provide an upper bound on the size of these "extra messages"?}
    \item \textbf{Application message retrieval completeness.} When client nodes download messages relevant to the applications they use from storage nodes, they must be able to verify that the messages they received are the complete set of messages relevant to their applications, for specific blocks, and that there are no omitted messages. \alberto{Should client also be able to verify the order of the messages (\ie. nodes cannot trick them into believing that some messages arrived first than others.)? I guess you get that for free since you use Merkle trees.\mustafa{That is correct, maybe I should talk about that, but at the moment I don't consider per-block message ordering.}}
    \item \textbf{Application state sovereignty.} Client nodes must be able to execute all of the messages relevant to the applications they use to compute the state for their applications, without needing to execute messages from other applications, unless other specific applications are explicitly declared as dependencies. \mustafa{Encapsulation? \alberto{Chainspace defines a notion of encapsulation; but is it the same notion? I believe Chainspace says "objects can only be modified by the smart contract that created them" (ie. it relates to write operations); here the notions seems more specific to read operations}}
\end{enumerate}

    \section{Block Validity Rule Design}\label{sec:block-validity-rule}

The key idea of \sysname is that the result of $\mathsf{blockValid}(h_i)$ should only depend on whether the data required to compute $\mathsf{mRoot}_i$ is available to the network or not, rather on whether any of the messages in the block correspond to transactions that satisfy the rules of some state machine (Goal 1 in \Cref{sec:goals}). This way, we can decouple transaction validity rules from the consensus rules, as the result of $\mathsf{inChain}$ does not depend on the contents of the messages in the block $M_i$, when $\mathsf{blockValid}(h_i)$ is computed (recall $\mathsf{inChain}$ on $h_i$ can only return $\mathsf{true}$ if and only if $\mathsf{blockValid}(h_i)$ returns $\mathsf{true}$).

We consider that checking the availability of the data necessary to recompute $\mathsf{mRoot}_i$ is the bare minimum necessary requirement to have a useful functioning blockchain. This is because, as we shall see in \Cref{sec:application-model}, clients need to know the transactions that have occurred in the blockchain in order to know the state of applications on the blockchain and thus do anything useful. If the data behind a block is unavailable, clients would not be able to compute the state of their applications.

We provide definitions for data availability soundness and agreement, adapted from \cite{al-bassam2018} for the threat model described in \Cref{sec:model}.

\begin{definition}[Data Availability Soundness]\label{def:data-availability-soundness}
    If an honest node accepts a block as available, then at least one honest storage node has the full block data or will have the full block data within some known maximum delay $k \times \delta$ where $\delta$ is the maximum network delay.
\end{definition}

\begin{definition}[Data Availability Agreement]\label{def:data-availability-agreement}
    If an honest node accepts a block as available, then all other honest nodes will accept that block as available within some known maximum delay $k \times \delta$ where $\delta$ is the maximum network delay.
\end{definition}

We offer two possible validity rules with different trade-offs. \Cref{sec:simplistic-validity-rule} describes a simple validity rule that satisfies \Cref{def:data-availability-soundness} and \Cref{def:data-availability-agreement} with 100\% probability, for an $O(n)$ bandwidth cost where $n$ is the size of the block, because the node must download the entire block data to confirm that it is available. \Cref{sec:probabilistic-validity-rule} describes a probabilistic validity rule that satisfies \Cref{def:data-availability-soundness} and \Cref{def:data-availability-agreement} with a high but less than 100\% probability, but with a $O(\sqrt{n} + \log(\sqrt{n}))$ bandwidth cost because the block's row and column Merkle roots and only a static number of samples and their logarithmically-sized Merkle proofs from the block need to be downloaded. This bandwidth cost is analysed further in \Cref{sec:implementation}.

\subsection{Simplistic Validity Rule}\label{sec:simplistic-validity-rule}

In the Simplistic Validity Rule, $\mathsf{blockValid}(h_i)$ returns $\mathsf{true}$ if and only if upon receiving a block header $h_i$ from the network, the node is also able to download $M_i$ from the network and authenticate that the Merkle root of the downloaded $M_i$ is $\mathsf{mRoot}_i$, by checking that $\mathsf{root}(M_i) = \mathsf{mRoot}_i$.

Upon $\mathsf{blockValid}(h_i)$ returning $\mathsf{true}$, the node must distribute $h_i$ and $M_i$ to the nodes it is connected to, should the nodes request the data if they do not have it. The node should thus store $M_i$ for at least $\delta$, the maximum network delay.

\begin{theorem}
    The Simplistic Validity Rule satisfies \Cref{def:data-availability-soundness} (Soundness).
\end{theorem}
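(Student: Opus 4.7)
The plan is to unwind the Simplistic Validity Rule and combine it with the network-level assumptions of \Cref{sec:model} (no eclipse attacks, no network partition, at least one honest storage node, maximum delay $\delta$). First, I would fix an arbitrary honest node $N$ that accepts a block as available; by the definition of the rule, $N$ has already obtained $M_i$ from the network and verified $\mathsf{root}(M_i) = \mathsf{mRoot}_i$. So at the moment of acceptance, the full block data $M_i$ is in the possession of at least one honest participant, namely $N$ itself. The remaining task is to show that the data either is, or within bounded time will be, in the hands of an honest \emph{storage} node.

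Next, I would argue propagation. The rule itself requires $N$ to serve $h_i$ and $M_i$ to any connected node that requests them and to retain $M_i$ for at least $\delta$. The threat model guarantees that $N$ is connected to at least one other honest node, that the network is connected, and that messages between honest nodes arrive within $\delta$. I would therefore walk along an honest path from $N$ to the honest storage node $S$ whose existence is assumed in \Cref{sec:model}. By induction on the length $\ell$ of this path, each honest intermediate node either already has $M_i$ or can request and receive it from its honest predecessor within $\delta$, and will then serve it onward. After at most $\ell \cdot \delta$ time, $S$ holds the full block data.

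Finally, I would set $k = \ell$ (or, more cleanly, the diameter of the honest subgraph, which is a finite network-level constant under the given assumptions), and observe that if $N$ itself is already a storage node the bound holds trivially with $k=0$. Conclude that whenever an honest node accepts a block under the Simplistic Validity Rule, an honest storage node holds $M_i$ within $k \cdot \delta$, which is exactly \Cref{def:data-availability-soundness}.

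The main obstacle I anticipate is the propagation step: the theorem statement and the model are somewhat informal about how $M_i$ actually travels between honest peers (pull vs.\ push, how $S$ learns that $h_i$ has been accepted and thus to request it). I would address this by making the gossip obligation in the rule explicit, namely that $N$ (and every honest relay) both announces $h_i$ and honors requests for $M_i$ during its $\delta$-long retention window, so that honest successors on the path from $N$ to $S$ can pull the data in time. If the retention window $\delta$ is tight, I may need to sharpen it to $k\delta$ to ensure $S$'s pull completes before $M_i$ is evicted; this small adjustment is where I expect the proof to need the most care.
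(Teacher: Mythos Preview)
Your proposal is correct and follows the same gossip-propagation argument as the paper: the accepting honest node possesses $M_i$, honest nodes relay it, and the assumed honest storage node eventually receives it. The paper's proof is much terser than yours but structurally identical.

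One minor divergence worth noting: you bound the delay by $\ell\cdot\delta$ via hop-by-hop induction, whereas the paper simply asserts arrival ``within the maximum network delay $\delta$'' (implicitly $k=1$). This is because the network assumption in \Cref{sec:model} is already end-to-end---``if an honest node can receive a message from the network at time $T$, then any other honest node can also do so at time $T' \leq T + \delta$''---so once $N$ holds $M_i$ and is willing to serve it, every honest node (in particular the honest storage node) can obtain it within a single $\delta$. Your per-hop induction is therefore more work than the model requires, though it is not wrong and would be needed under a weaker per-link delay assumption. The retention-window concern you raise is real in a per-hop model but dissolves under the paper's global-$\delta$ assumption, since $N$ retaining $M_i$ for $\delta$ suffices for any honest requester to pull it directly.
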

\begin{proof}
    If $\mathsf{blockValid}(h_i)$ returns $\mathsf{true}$ on an honest node, then the node will distribute $M_i$ to the nodes it is connected to, of at least one of which is honest, and will also run $\mathsf{blockValid}(h_i)$ and distribute $M_i$, and so on. Thus a storage node will receive $M_i$ within the maximum network delay $\delta$, which there exists at least one of which is honest.
\end{proof}

\begin{theorem}
    The Simplistic Validity Rule satisfies \Cref{def:data-availability-agreement} (Agreement).
\end{theorem}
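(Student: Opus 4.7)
The plan is to exploit the gossip step built into the Simplistic Validity Rule together with the connectivity assumption from \Cref{sec:model} (every honest node is connected via some path of honest nodes to every other honest node, and any message sent between two honest nodes arrives within $\delta$). The key observation is that the very act of an honest node accepting $h_i$ forces it to hold $M_i$ and serve it on request for at least $\delta$, so acceptance of $h_i$ is, in effect, contagious across the honest subgraph.

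First I would formalise a one-hop claim: if an honest node $N$ has $\mathsf{blockValid}(h_i) = \mathsf{true}$ at time $T$, then any honest neighbour $N'$ of $N$ that has already received $h_i$ can obtain $M_i$ from $N$ by time $T + \delta$, verify $\mathsf{root}(M_i) = \mathsf{mRoot}_i$, and therefore also have $\mathsf{blockValid}(h_i) = \mathsf{true}$. Second, I would extend this to multi-hop propagation by induction on distance in the honest subgraph: letting $d$ be the (assumed finite) diameter of the honest subgraph, after at most $d$ hops every honest node has either requested and received $M_i$ or will do so within the cumulative delay $d \cdot \delta$. Setting $k = d$ then yields the $k \times \delta$ bound required by \Cref{def:data-availability-agreement}.

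The main obstacle is making the induction watertight given the peer-to-peer pull-style gossip: I need to argue that each intermediate honest node not only forwards $M_i$ but also retains it long enough that downstream honest nodes still find it available when they pull. The rule's stipulation that a node storing $M_i$ keeps it for at least $\delta$ handles a single hop, but for $d$ hops I would argue inductively that once a node accepts the block it joins the pool of honest nodes storing $M_i$, so the set of honest storage points only grows over time; hence retention compounds rather than decays. A secondary subtlety is ensuring that every honest node eventually sees the header $h_i$ in the first place, which follows from the same gossip assumption applied to the (much smaller) header propagation, and that malicious neighbours cannot prevent the pull by dropping requests, which is ruled out by the assumption that each honest node has at least one honest neighbour to route around adversaries.
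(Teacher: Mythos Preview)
Your argument is correct and follows the same gossip-propagation idea as the paper's proof: once an honest node accepts $h_i$ it holds and serves $M_i$, so acceptance spreads through the honest subgraph. The only difference is that the paper reads $\delta$ as an \emph{end-to-end} network delay (per the assumption in \Cref{sec:model} that if one honest node can receive a message at time $T$, any other can by $T+\delta$), so it skips the hop-by-hop induction and implicitly takes $k=1$; your induction with $k=d$ is sound but unnecessary under that model.
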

\begin{proof}
    If $\mathsf{blockValid}(h_i)$ returns $\mathsf{true}$ on an honest node, then the node will distribute $M_i$ to the nodes it is connected to, of at least one of which is honest, and will also run $\mathsf{blockValid}(h_i)$ and distribute $M_i$, and so on. Thus all honest nodes will receive $M_i$ within the maximum network delay $\delta$, and $\mathsf{blockValid}(h_i)$ will thus return $\mathsf{true}$, causing them to accept $h_i$ as an available block.
\end{proof}

\subsection{Probabilistic Validity Rule}\label{sec:probabilistic-validity-rule}

For the Probabilistic Validity Rule, $\mathsf{blockValid}(h_i)$ utilises the probabilistic data availability scheme based on random sampling the erasure coded version of the block data $M_i$ described by Al-Bassam \etal \cite{al-bassam2018} and summarised in \Cref{sec:data-availability}. Proofs for \Cref{def:data-availability-soundness} and \Cref{def:data-availability-agreement} are provided in \cite{al-bassam2018}. Unlike the Simplistic Validity Rule, this scheme is probabilistic in satisfying these definitions, however it is more efficient because only a part of the block data needs to be downloaded to obtain high probability guarantees that the data is available.

For examples if using the 2D Reed-Solomon coding scheme with a $\frac{1}{4}$ code rate described in \cite{al-bassam2018} in a block that has been divided into 4096 shares, only 15 samples corresponding to 0.4\% of the block data needs to be downloaded by a node to achieve a 99\% guarantee that the block data is available \cite{al-bassam2018}. Further analysis will be provided in \Cref{sec:implementation}.

The bandwidth cost of executing $\mathsf{blockValid}(h_i)$ is $O(\sqrt{n} + \log(\sqrt{n}))$ where $n$ is the size of the block, because each node needs to download $2\sqrt{n}$ row and column Merkle roots for the block's 2D erasure coded data, and a fixed number of share samples and their corrosponding Merkle proofs authenticating them to one of the block's row or column roots (which are logarithmic in size).

As mentioned in \Cref{sec:background}, this scheme only works if there is a sufficient minimum number of nodes in the network making a sufficient number of sample requests so that the network collectively samples enough shares to be able to reconstruct the block, thus the maximum block size and number of samples each node makes should be set to reasonable values such that this condition is met.

We note that this creates an interesting property: in order to (securely) increase block size and thus the throughput of the network, one can increase the number of nodes in the network. This is different to traditional blockchain systems such as Bitcoin \cite{nakamoto2008}, where deploying more full nodes to the network does not increase the on-chain throughput of the network. By reducing block verification to data availability verification, the blockchain has scalability properties more similar to those of peer-to-peer file-sharing networks \cite{benet14}, where adding more nodes to the network increases the storage capacity of the network. See Table 1 in \cite{al-bassam2018} for example parameterisation and numbers for the minimum nodes that are required for certain block sizes.

Additionally, as also mentioned in \Cref{sec:background}, in the (hopefully rare) case that the erasure code is incorrectly generated, the size of the fraud proof for this would be approximately $O(\sqrt{n})$, or $O(\sqrt{n} + \sqrt{n}\log(\sqrt{n}))$ including Merkle proofs (for example, for a 1MB block with 225 byte shares, the size of a fraud proof would be 26KB; see \cite{al-bassam2018} for more figures).
    \section{Application-Layer Design}\label{sec:application-model}

\subsection{Application Model}

\begin{figure}
    \centering
    \includegraphics[width=\linewidth]{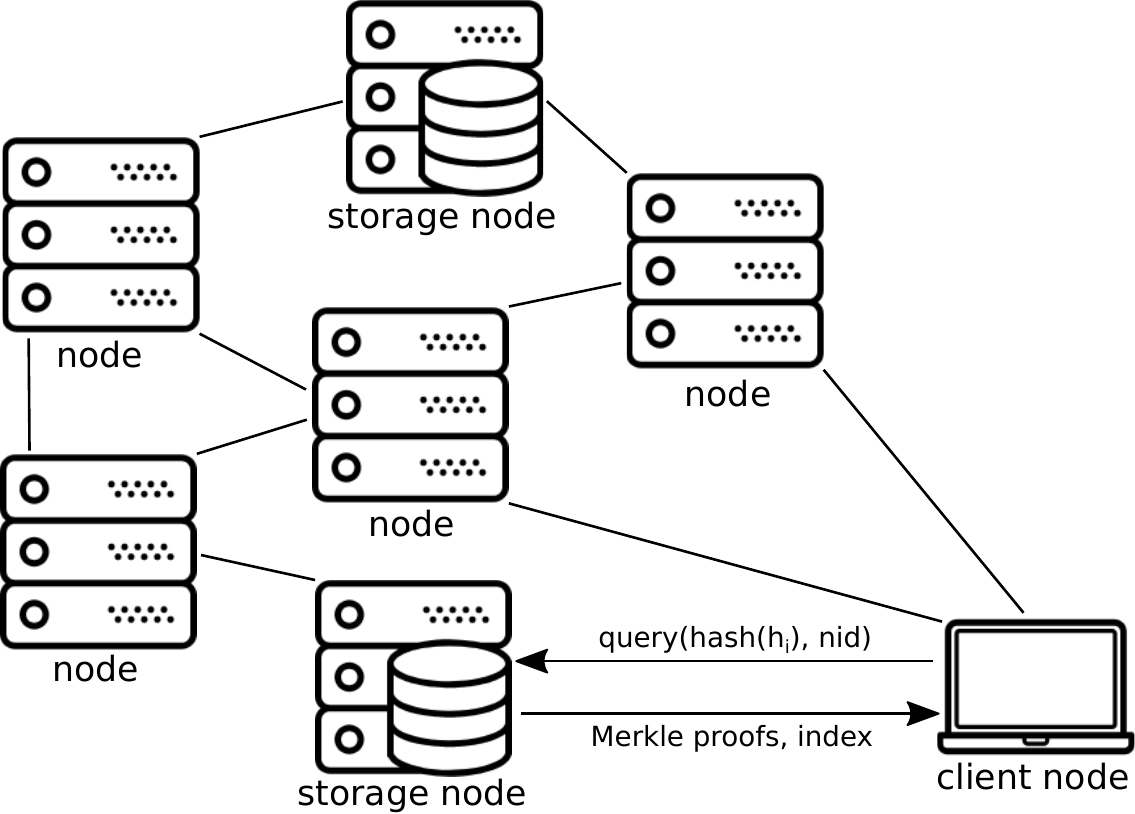}
    \caption{Overview of interaction between client nodes and storage nodes. }
    \label{fig:client-protocol}
\end{figure}

Recall in \Cref{sec:model} that \sysname has client nodes which read and write messages in blocks relevant to their application, and that the contents of blocks have no validity rules, and thus any arbitrary message can be included in a block. \sysname applications are akin to smart contracts, with the primary difference being that they are executed by end-user clients rather than consensus participants. Thus, application logic is defined and agreed upon entirely off-chain by clients of that application, and may therefore be written in any programming language or environment. 

A client can submit a message to be recorded on the blockchain that specifies a transaction for a specific application, which will then be read and parsed by other clients of that applications, which may then modify their copy of the state of that application.

Applications are identified by their own `namespace', and well-formed messages associated with a specific application can be parsed to determine their application namespace. We assume a function $\mathsf{ns}(m)$ that takes as input a message $m$ and returns its namespace ID.  Therefore if a client is a user of an application with ID $\mathsf{nid}$, it is interested in
reading all messages $m$ in the ledger such that $\mathsf{ns}(m) = \mathsf{nid}$, in order to determine the state of its application.
\alberto{What are the characteristics of that function; does it associate \emph{at most} one namespace ID per message $m$? does it associate \emph{exactly} one namespace ID per message $m$? Is this function one-way? etc}

Because the consensus of the blockchain does not require checking the validity of any transactions included in the blockchain, the ledger may include transactions that are considered invalid according to the logic of certain applications. Therefore we define a state transition function that \sysname applications should use that does not return an error. Given an application with ID $\mathsf{nid}$:

\begin{equation*}
    \mathsf{transition}_{\mathsf{nid}}(\mathsf{state}, \mathsf{tx}) = \mathsf{state'}
\end{equation*}

$\mathsf{transition}_{\mathsf{nid}}(\mathsf{state}, \mathsf{tx})$ cannot return an error because if an adversarial actor includes an invalid $\mathsf{tx}$ in a block, then the state of the application would end up in a permanently erroneous state. Therefore if $\mathsf{tx}$ is considered erroneous by the logic of the transition function, it should simply return the previous state, $\mathsf{state}$, as the new state.

Clients who use an application with ID $\mathsf{nid}$ should agree with each other on the logic or code of $\mathsf{transition}_\mathsf{nid}$. If for example, one client decides to use different logic for $\mathsf{transition}_\mathsf{nid}$, then that client would reach a different final $\mathsf{state}$ for that application than everyone else, which in effect means that they would be using a different application, but it would not effect anyone else.

Interestingly, this means that it is possible for users of an application to decide to change the logic of that application without requiring a hard-fork of the blockchain that would effect other applications. However if immutability of the logic is important, the creator of the application may decide for example that the namespace identifier of the application should be the cryptographic hash of the application's logic. \alberto{This seems very important, but buried into the text; What about emphasising it (potentially selling it from  the abstract/intro)?}

\subsubsection{Cross-Application Calls}\label{sec:cross-application}

Some applications may want to call other applications (\ie a cross-contract call). We consider two scenarios in which an application may want to do this: either as a pre-condition or a post-condition. We consider a model where all cross-application calls can be expressed as pre-conditions or post-conditions, similar to \eg the transaction model of Chainspace \cite{al-bassam2018sharding}.

Recall in \Cref{sec:goals} that Goal 4 of \sysname is application state sovereignty, which means that users of an application should not have to execute messages from other irrelevant applications. An application can specify other applications as dependencies in its logic, where knowledge of the state of the dependency applications is necessary in order to compute the state of the application. An application $B$ is thus defined as `relevant' to users of application $A$ if is $B$ is a dependency of $A$, however if $A$ is not a dependency of $B$, then $A$ is not relevant to the users of $B$. In order to preserve the notion of state sovereignty, this means third party applications cannot force other applications to take a dependency on the state of third party applications.

In the case of a pre-condition, an application may have a function that can only be executed if another application that it depends on is in a certain state. In such a case, in order to validate that these pre-conditions are met, clients of an application must also download and verify the state of the application's dependency applications; however the clients of the dependency application do not need to download the state of the applications which depend on it.

For example, consider a name registrar application where clients can register names only if they send money to a certain address in a different currency application. The clients of the name registrar application would have to also become clients of the currency application, in order to verify that when a name is registered, there is a corresponding transaction that sends the funds to pay for the name to the correct address.

In the case of a post-condition, an application may want to modify the state of another application after a transaction. Post-conditions are only possible if the application whose state is being modified has explicitly set the application that is executing the post-condition as a dependency application to the post-condition application. This is because in order to execute the post-condition, the clients of the post-condition application would have to download and verify the state of the application executing the post-condition, to verify that it has the authority to execute the post-condition. If any application was allowed to execute a post-condition in any application, then it would mean that clients would have to download and verify other applications against their will, thus violating Goal 4 in \Cref{sec:goals} (application state sovereignty). Post-conditions may however be executed indirectly through sidechain mechanisms such as federated pegs \cite{back2014}, but this is out of scope for this paper.

\subsection{Storage Nodes and Namespaced Merkle Tree}\label{sec:namespaced-merkle-tree}

\begin{figure}
    \centering
    \includegraphics[width=\linewidth]{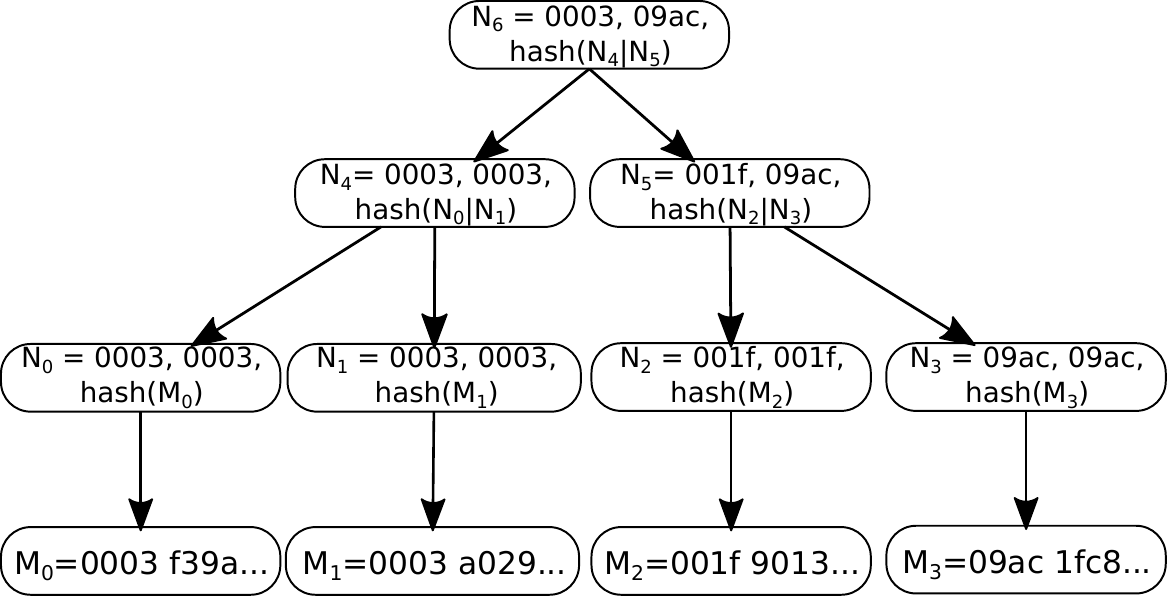}
    \caption{An example of a namespaced Merkle tree.}
    \label{fig:namespaced-merkle-tree}
\end{figure}

In order to satisfy Goal 2 in \Cref{sec:goals} (application message retrieval partitioning) to allow client nodes to be able to retrieve all the messages relevant to the application namespaces they are interested in without having to download and parse the entire blockchain themselves (\eg if they use the Probabilistic Validity Rule, or simply assume that the consensus has a honest-majority that only accepts available blocks), they may query storage nodes for all of the messages in a particular application namespace for particular blocks. The storage node can then return Merkle proofs the relevant messages being included in the blocks.

In order to allow storage node to prove to clients that they have returned the complete set of messages for a namespace included in a block's Merkle tree of messages (Goal 3 in \Cref{sec:goals}, application message retrieval completeness), we use a `namespaced` Merkle tree described below, which is an ordered Merkle tree that uses a modified hash function so that each node in the tree includes the range of namespaces of the messages in all of the descendants of each node. The leafs in the tree are ordered by the namespace identifiers of the messages.

In a namespaced Merkle tree, each non-leaf node in the tree contains the lowest and highest namespace identifiers found in all the leaf nodes that are descendants of the non-leaf node, in addition to the hash of the concatenation of the children of the node. This enables Merkle inclusion proofs to be created that prove to a verifier that all the elements of the tree for a specific namespace have been included in a Merkle inclusion proof.

The Merkle tree can be implemented using standard unmodified Merkle tree algorithms, but with a modified hash algorithm that depends on an existing hash function, that prefixes hashes with namespace identifiers. Suppose $\mathsf{hash}(x)$ is a cryptographically secure hash function such as SHA-256. We define a wrapper function $\mathsf{nsHash}(x)$ that produces hashes prefixed with namespace identifiers. A namespaced hash has the format $\mathsf{minNs} || \mathsf{maxNs} || \mathsf{hash(x)}$, where $\mathsf{minNs}$ is the lowest namespace identifier found in all the children of the node that the hash represents, and $\mathsf{maxNs}$ is the highest.

The value of $\mathsf{minNs}$ and $\mathsf{maxNs}$ in the output of $\mathsf{nsHash}(x)$ depends on if the input $x$ is a leaf or two concatenated tree nodes, as illustrated by \Cref{fig:namespaced-merkle-tree}. If $x$ is a leaf, then $\mathsf{minNs} = \mathsf{maxNs} = \mathsf{ns}(x)$, as the hash contains only one leaf with a single namespace.

If $x$ is two concatenated tree nodes, then $x = \mathsf{left}||\mathsf{right}$ where $\mathsf{left} = \mathsf{leftMinNs} || \mathsf{leftMaxNs} || \mathsf{hash(x)}$ and $\mathsf{right} = \mathsf{rightMinNs} || \mathsf{rightMaxNs} || \mathsf{hash(x)}$. Thus in the output of $\mathsf{nsHash}(x)$, $\mathsf{minNs} = \mathsf{min}(\mathsf{leftMinNs}, \mathsf{rightMinNs})$ and $\mathsf{maxNs} = \mathsf{max}(\mathsf{leftMaxNs}, \mathsf{rightMaxNs})$.
\alberto{The above equation really helped me to understand; but it is buried in text (and a bit ugly).}

An adversarial consensus node may attempt to produce a block that contains a Merkle tree with children that are not ordered correctly. To prevent this, we can set a condition in $\mathsf{nsHash}$ such that there is no valid hash when $\mathsf{leftMaxNs} \geq \mathsf{rightMinNs}$, and thus there would be no valid Merkle root for incorrectly ordered children. Therefore $\mathsf{blockValid}(h_i)$ would return false in the simplistic and probabilistic validity rules as there is no possible $M_i$ where $\mathsf{root}(M_i) = \mathsf{mRoot}_i$. Additionally, recall that $\mathsf{root}(M_i) = \mathsf{mRoot}_i$ and thus $\mathsf{blockValid}(h_i)$ would also return false if the Merkle root of the tree is constructed incorrectly, \eg if the minimum and maximum namespaces for a node in the tree are incorrectly labelled.

Because only the hash function is being modified in the Merkle tree, the Merkle tree is generated, and Merkle proofs are verified using standard algorithms. However, during Merkle proof verification, and extra step is necessary in order to verify that the proofs covers all of the messages for a specific namespace.

A client node can send a query $\mathsf{query}(\mathsf{hash}(h_i), \mathsf{nid})$ to a storage node to request all of the messages in block $h_i$ that have namespace ID $\mathsf{nid}$. The storage node replies with a list of Merkle proofs $\mathsf{proofs} = (\mathsf{proof}_0, \mathsf{proof}_1, ..., \mathsf{proof}_n)$ and an index $\mathsf{index}$ that specifies the index in the tree in which $\mathsf{proof}_0$ is located. In addition to the client node verifying all the proofs, the client node also verifies that the highest namespace in all of the left siblings included in $\mathsf{proof}_0$ are smaller than $\mathsf{nid}$, and the lowest namespace in all of the right siblings included in $\mathsf{proof}_n$ are larger than $\mathsf{nid}$.

If a block has no messages associated with $\mathsf{nid}$, then only one proof $\mathsf{proof}_0$ is returned which corresponds to the child in the tree where the child to the left of it is smaller than $\mathsf{nid}$ but the child to the right of it is larger than $\mathsf{nid}$. The actual message in the child does not need to be included in the proof as the purpose of the proof would just be to show that there are no messages in the tree for $\mathsf{nid}$.

\begin{theorem}
    Assuming the Merkle tree is correctly constructed, an incomplete set of Merkle proofs $\mathsf{proofs} = (\mathsf{proof}_0, \mathsf{proof}_1, ..., \mathsf{proof}_n)$ for a request for the messages of $\mathsf{nid}$ can always be detected.
\end{theorem}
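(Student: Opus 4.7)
The plan is to argue by contradiction: assume the storage node omits at least one message with namespace $\mathsf{nid}$ from its response, and show that at least one of the client's verification checks must fail. The key structural fact I would establish first is that in a correctly constructed namespaced Merkle tree, the set of leaves with a given namespace $\mathsf{nid}$ occupies a contiguous range $[a,b]$ of leaf indices. This follows directly from the constraint baked into $\mathsf{nsHash}$ that there is no valid hash unless $\mathsf{leftMaxNs} < \mathsf{rightMinNs}$: this forces all leaves to be sorted by $\mathsf{ns}(\cdot)$, so equal-namespace leaves must be consecutive.

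Next I would make precise the invariant that the returned proofs $\mathsf{proof}_0, \ldots, \mathsf{proof}_n$ correspond to leaves at indices $\mathsf{index}, \mathsf{index}+1, \ldots, \mathsf{index}+n$ with respective namespace $\mathsf{nid}$. Each individual $\mathsf{proof}_i$ fixes the leaf's position in the tree (its index is determined by the left/right pattern of sibling hashes along the path), and the verifier rejects any $\mathsf{proof}_i$ whose leaf is not at namespace $\mathsf{nid}$. So the sequence of accepted proofs is necessarily a contiguous run of leaves all having namespace $\mathsf{nid}$, sitting inside $[a,b]$.

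Now I would split the claim ``some $\mathsf{nid}$-leaf is omitted'' into three cases relative to the returned interval $[\mathsf{index},\mathsf{index}+n]$:
\emph{Case 1 (omission to the left).} Some leaf at index $j < \mathsf{index}$ has namespace $\mathsf{nid}$. Then on the authentication path of $\mathsf{proof}_0$, there is a left sibling whose subtree contains this leaf. By the namespaced hashing rule, that sibling's $\mathsf{maxNs}$ is $\geq \mathsf{nid}$, which violates the check that every left sibling in $\mathsf{proof}_0$ has maximum namespace strictly smaller than $\mathsf{nid}$.
\emph{Case 2 (omission to the right).} Symmetric: a right sibling on the path of $\mathsf{proof}_n$ would have $\mathsf{minNs} \leq \mathsf{nid}$, violating the dual boundary check.
\emph{Case 3 (interior omission).} By the contiguity lemma above, the $\mathsf{nid}$-leaves form a single run in the tree; since the returned run is itself contiguous and lies inside that run, any missing $\mathsf{nid}$-leaf must lie strictly to the left of $\mathsf{index}$ or strictly to the right of $\mathsf{index}+n$, reducing this case to Case 1 or Case 2.

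The main obstacle, and the step I would handle most carefully, is tying together the two ingredients in Case 1/2: showing that the presence of \emph{any} $\mathsf{nid}$-leaf in the subtree rooted at a sibling forces that sibling's advertised $[\mathsf{minNs},\mathsf{maxNs}]$ range to contain $\mathsf{nid}$. This uses an induction on tree height combined with the assumption that the tree is correctly constructed (so that each internal node's labels genuinely equal the min/max of its subtree's leaf namespaces), together with the fact that the sibling's labels are bound into the Merkle root that $\mathsf{blockValid}$ has already accepted. Once that invariant is stated cleanly, the boundary checks on $\mathsf{proof}_0$ and $\mathsf{proof}_n$ immediately contradict the existence of any omitted $\mathsf{nid}$-leaf, completing the proof.
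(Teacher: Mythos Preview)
Your proposal is correct and follows essentially the same approach as the paper: the core argument is that an omitted $\mathsf{nid}$-leaf to the left of the returned range forces some left sibling in $\mathsf{proof}_0$ to carry $\mathsf{maxNs} \geq \mathsf{nid}$, and symmetrically on the right for $\mathsf{proof}_n$. The paper's own proof is exactly this two-case split, stated tersely; your version adds the contiguity lemma for $\mathsf{nid}$-leaves, the explicit reduction of the ``interior omission'' case, and the inductive justification of the min/max labelling invariant, all of which the paper leaves implicit in its assumption that the tree is ``correctly constructed.''
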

\begin{proof}
    Let us assume that an adversary returns an incomplete set of correct proofs $\mathsf{proofs} = (\mathsf{proof}_0, \mathsf{proof}_1, ..., \mathsf{proof}_n)$ for $\mathsf{nid}$, and $\mathsf{index}$ is the index in the tree that $\mathsf{proof}_0$ is located at.
    
    If an omitted message for $\mathsf{nid}$ has an index lower than $\mathsf{index}$, then $\mathsf{proof}_0$ will contain a left sibling node with a maximum namespace $\mathsf{maxNs}$ where $\mathsf{maxNs} > \mathsf{nid}$, thus proving that there is an omitted message to the left of the proof set.
    
    If an omitted message for $\mathsf{nid}$ has an index higher than $\mathsf{index} + n$, then $\mathsf{proof}_n$ will contain a right sibling node with a minimum namespace $\mathsf{minNs}$ where $\mathsf{nid} > \mathsf{minNs}$, thus proving that there is an omitted message to the right of the proof set.
\end{proof}

\subsection{DoS-resistance}

In the design of \sysname, consensus nodes are not responsible for validating transactions, and thus an adversarial client may submit many invalid transactions for namespaces, forcing clients to download many invalid transactions. In a permissioned system, consensus nodes can choose which clients can submit transactions. However in a permissionless system, there ought to be a way to prioritise transactions and to make it expensive to conduct DoS attacks.

\subsubsection{Transaction Fees}

Consensus nodes can choose to prioritise transactions that include transaction fees. However, any transaction fee system should ideally not require client nodes that read messages from the application namespaces they are interested in, to also validate the application that implements the currency that transaction fees are paid in.

To achieve this, when a message is submitted to consensus nodes for inclusion in a block, the submitter of the message can also submit a `fee transaction' for the currency application, and also attach to the fee transaction the hash of the `child' message that the fee is paying for, such that the fee in this special fee transaction can only be collected if the message behind the specified hash is included in the same block, according to the logic of the currency application.

Client nodes of the original application whose message that the fee is paying for do not need to validate the fee transactions in the currency application; only client nodes of the currency application (\eg the consensus nodes) do. Additionally, the client nodes of the currency system application do not have to download the child message itself to verify that it has been included in the block and thus the fee has been earned, but simply verify a Merkle proof that the hash of the child message is included in the same block.

We assume that fee transactions only specify one dependency message for simplicity, but they may specify multiple dependency messages.

There does not need to be a native currency to the system, as consensus nodes can choose to accept transaction fees in any currency application that they choose to recognise. \alberto{This seems an important insight; you may want to emphasise it.}

\subsubsection{Maximum Block Size}

A maximum block size can be implemented without requiring nodes to download the entire block's data to verify that it is below a certain size. Instead, each message, and thus each leaf in the Merkle tree of messages, may have a maximum size such that if a message $x$ is bigger than the allowed size, $\mathsf{nsHash}(x)$ would return an error, so $\mathsf{root}(M_i) = \mathsf{mRoot}_i$ and thus $\mathsf{blockValid}(h_i)$ would return false. If larger message sizes are required, a message could be chunked into multiple messages and parsed back into a single message by clients.
    \section{Implementation and Performance}\label{sec:implementation}

We implemented a prototype of \sysname in 2,865 lines of Golang code. The code has been released as a free and open-source project.\footnote{\url{https://github.com/musalbas/lazyledger-prototype}}

As well as the core \sysname system, we also implemented (and released) several example applications using \sysname. Each application's state is implemented as one or more key-value stores that can be read from or modified. Applications include:
\begin{itemize}
    \item A currency application where clients publish messages that are transactions for the transfer of funds between addresses that are elliptic curve public keys. Transactions are signed by the public keys of senders, and specify the amount of funds to send and the recipient address. In the key-value store, keys are public keys, and values are the corresponding balance of each public key, which is updated after each valid transaction.
    \item A name registration application where clients can: \first send a balance top-up transaction to the registrar's public key using a dependency currency application, so that clients can pay for name registrations using their balance with the registrar; and \second send a registration transaction to register a specified name to their public key, which reduces the balance of the registrant, if their balance is sufficient. The registration application has one key-value store representing the in-app topped-up balance of each public key, and another key-value store where each key represents a registered name and each value represents the public key the name has been registered to.
    \item A dummy application for testing purposes which adds arbitrary sized specified key-value pairs to its key-value store.
\end{itemize}

We present an evaluation of \sysname's performance and scalability properties.

\Cref{fig:plot1} compares how much data needs to be downloaded to execute the Simple Validity Rule and the Probabilistic Validity Rule to verify data availability, for varying block sizes. As expected, there is a linear relationship between block size and data downloaded for the Simple Validity Rule, as this requires downloading all of the block data to ensure that it is available. However, we can see that the relationship between block size and data downloaded for the Probabilistic Validity Rule is sub-linear and almost flat. This is because in order to execute the Probabilistic Validity Rule, nodes download a fixed number of samples and their corresponding Merkle proofs whose sizes increase logarithmically with the size of the block, as well as set of $2\sqrt{n}$ row and column Merkle roots for the block where $n$ is the size of the block.

\Cref{fig:plot2} compares the response size of queries for a specific namespace to a storage node (``application proofs''), for varying amounts of messages of different namespaces (measured by total bytes) that are not relevant to that query. We use currency application messages as the relevant queried namespace (although any other application could be used), fixing the number of currency messages in the block to 10, but increasing the total size of dummy application messages. We can observe that that for both simple blocks and probabilistic blocks, the size of the application proofs for the relevant application only increases logarithmically, because although messages that are not in the relevant namespace do not need to be downloaded, the size of the Merkle proofs for those messages increase logarithmically as the number of total messages in the block increases. The size of the application proofs for probabilistic blocks are smaller because a two-dimensional erasure code is used, where each column and row gets its own Merkle tree, and thus the Merkle proofs are smaller because there are less items in each tree.

\Cref{fig:plot3} follows the same setup as \Cref{fig:plot2}, however instead of comparing the size of the applications proofs for the currency application, we compare the size of the state that needs to be stored by users of the relevant application (in this case, the currency application). As expected, we observe that as the size of the state of other applications increase, the size of the state that needs to be stored for the currency application remains static.

\Cref{fig:plot4} and \Cref{fig:plot5} illustrate how the size of application proofs may vary for an application that has a dependency application. In this case we use the name registration application as an example, which requires users to follow the state of a currency application so that balance top-up transactions to the registrar can be verified. In the two graphs, we setup multiple instances of the name registration application for multiple registrars, but the user is only interested in following one of them. In \Cref{fig:plot4} we can observe that as the number of top-up transactions for the irrelevant name registration applications increase, the size of the application proofs for the relevant name registration application increases linearly, because the user must also download application proofs for the currency application, which has transactions being added to it by users of the other name registration applications. This extreme case where there are only top-up transactions defeats any scalability gains of \sysname, since all transactions require transactions in dependency applications that other users may follow.

However, \Cref{fig:plot5} shows the same but in the case of name registration transactions instead of balance top-up transactions. Here we see that irrelevant name registration transactions do not linearly increase the size of application proofs that need to be downloaded for other users, because only users of the relevant name registration application need to have knowledge of the registered names, and no dependency application is impacted.
\alberto{Would be better to increase the font of labels and axes of the figures.}

\begin{figure}
    \centering
    \includegraphics[width=\linewidth]{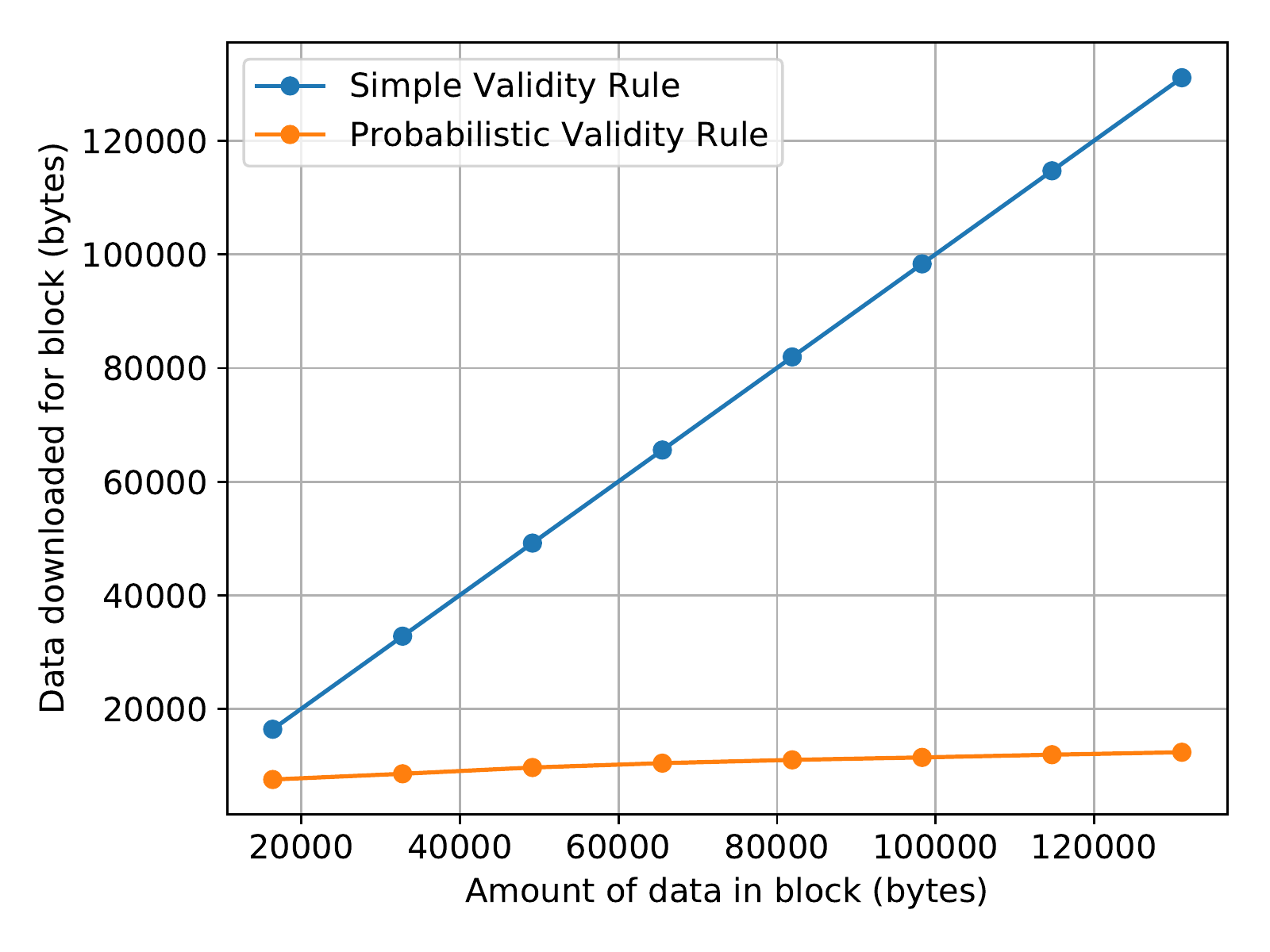}
    \caption{A graph showing how much data needs to be downloaded to execute the block validity rule to validate data availability versus the size of the block. For the Probabilistic Validity Rule, 15 samples are used.}
    \label{fig:plot1}
\end{figure}

\begin{figure}
    \centering
    \includegraphics[width=\linewidth]{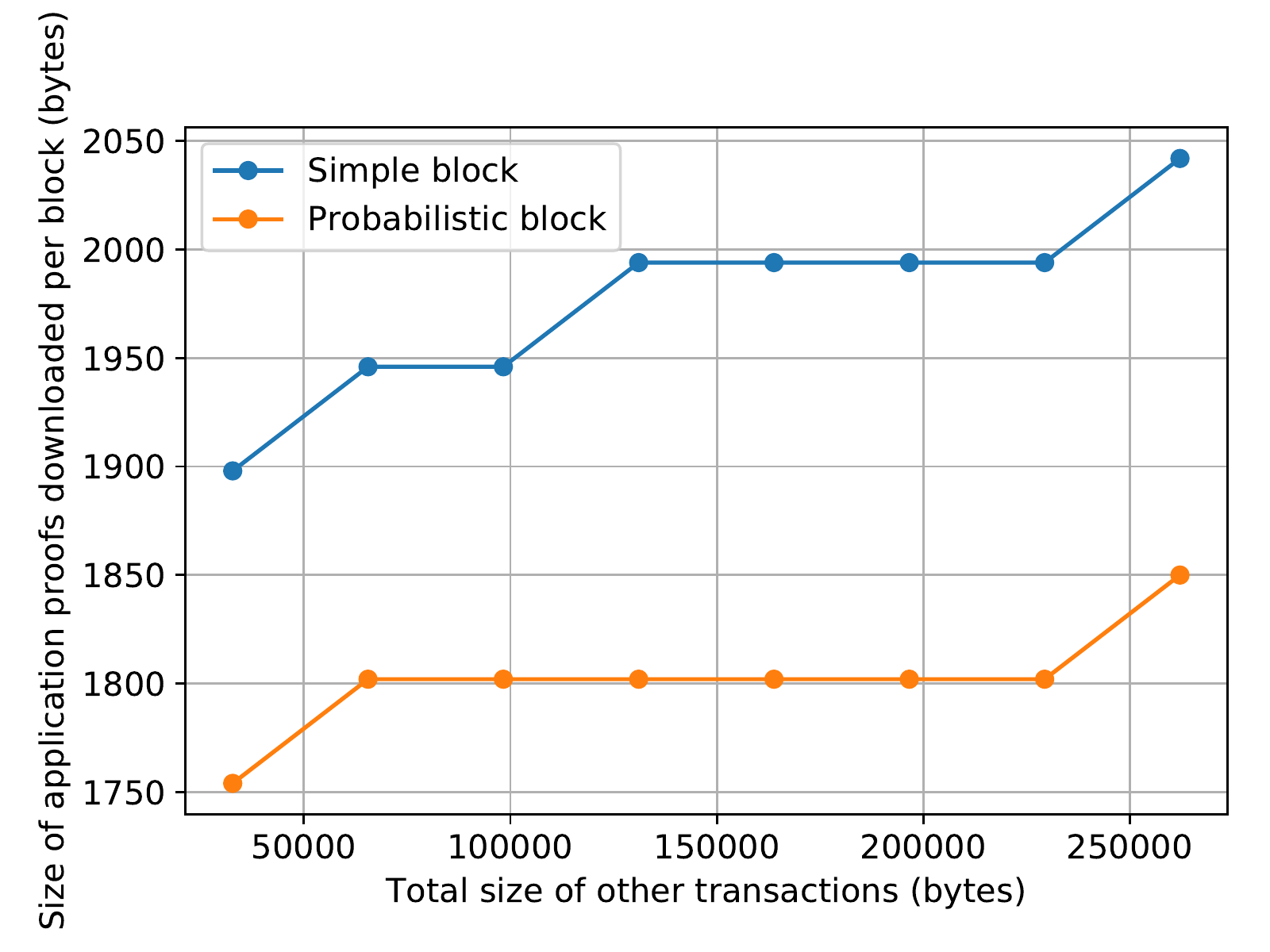}
    \caption{A graph showing the size of application proofs in a block for a currency application with 10 transactions versus the total size of all of the other transactions in the block.}
    \label{fig:plot2}
\end{figure}

\begin{figure}
    \centering
    \includegraphics[width=\linewidth]{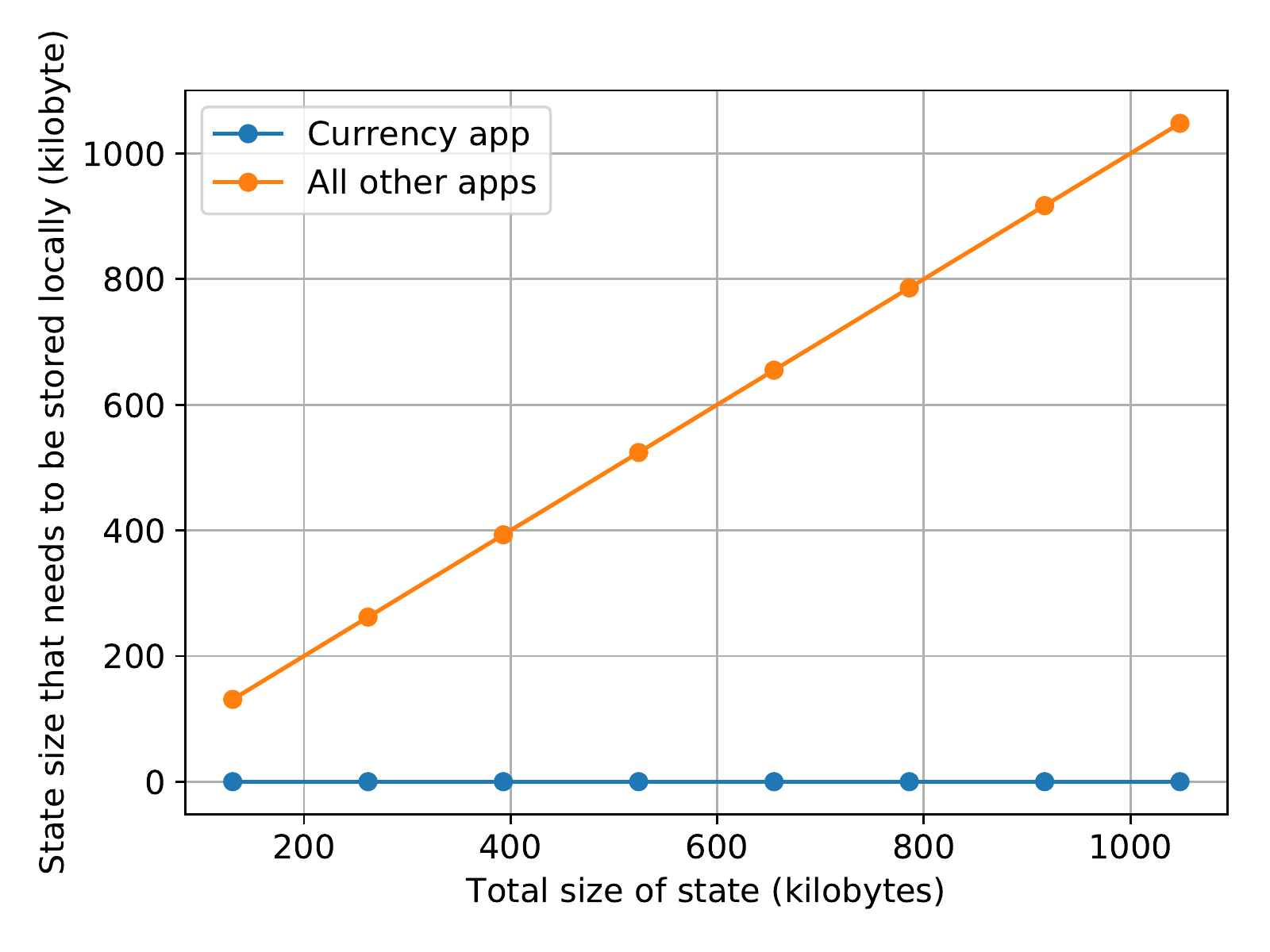}
    \caption{A graph showing the size of the state that needs to be stored after a block versus the total size the state of all apps in the block, for a currency app and all other apps.}
    \label{fig:plot3}
\end{figure}

\begin{figure}
    \centering
    \includegraphics[width=\linewidth]{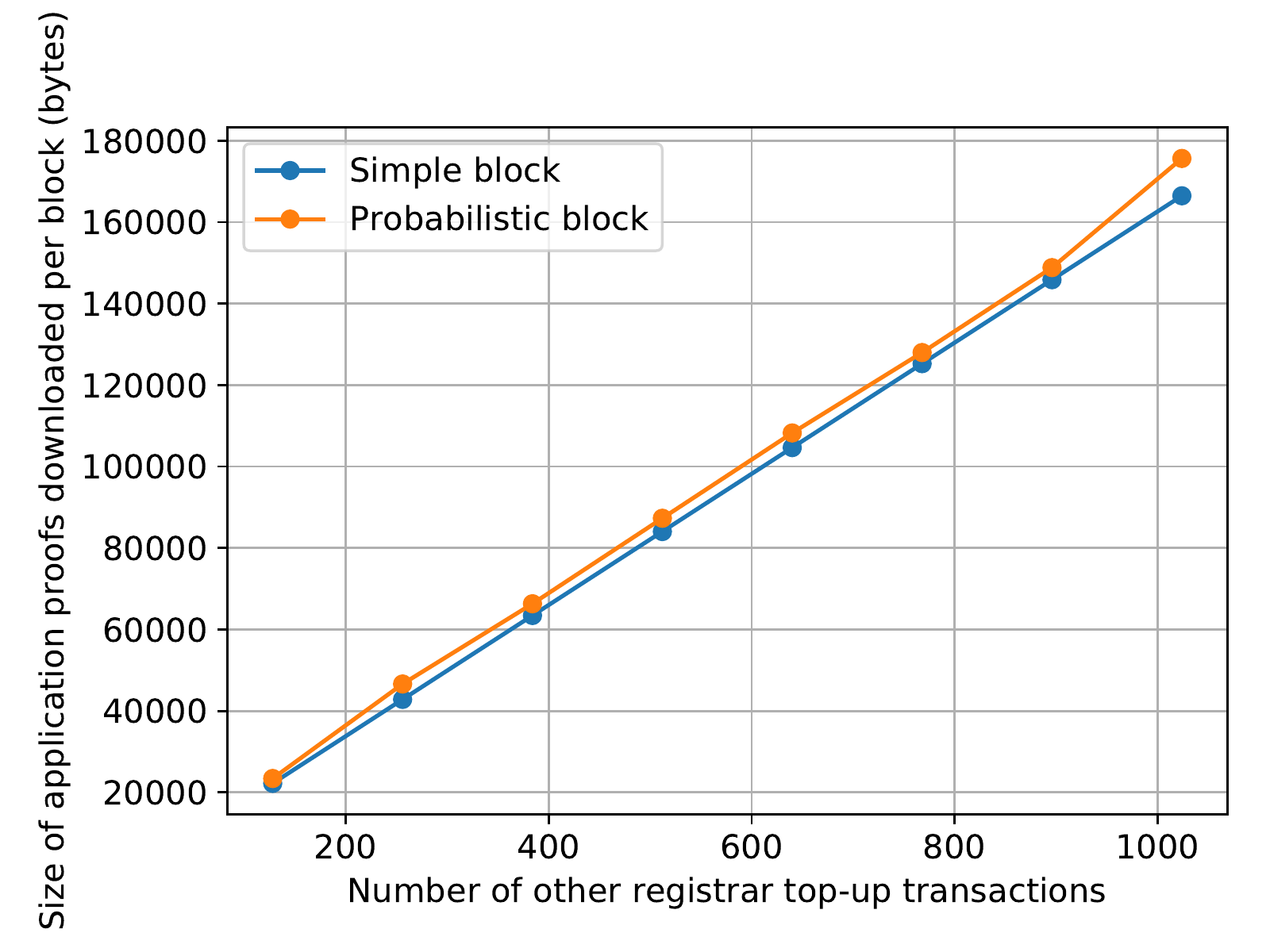}
    \caption{A graph showing the size of application proofs in a block for an instance of a registrar application with 10 top-up transactions versus the number of top-up transactions for other registrar application instances in the block.}
    \label{fig:plot4}
\end{figure}

\begin{figure}
    \centering
    \includegraphics[width=\linewidth]{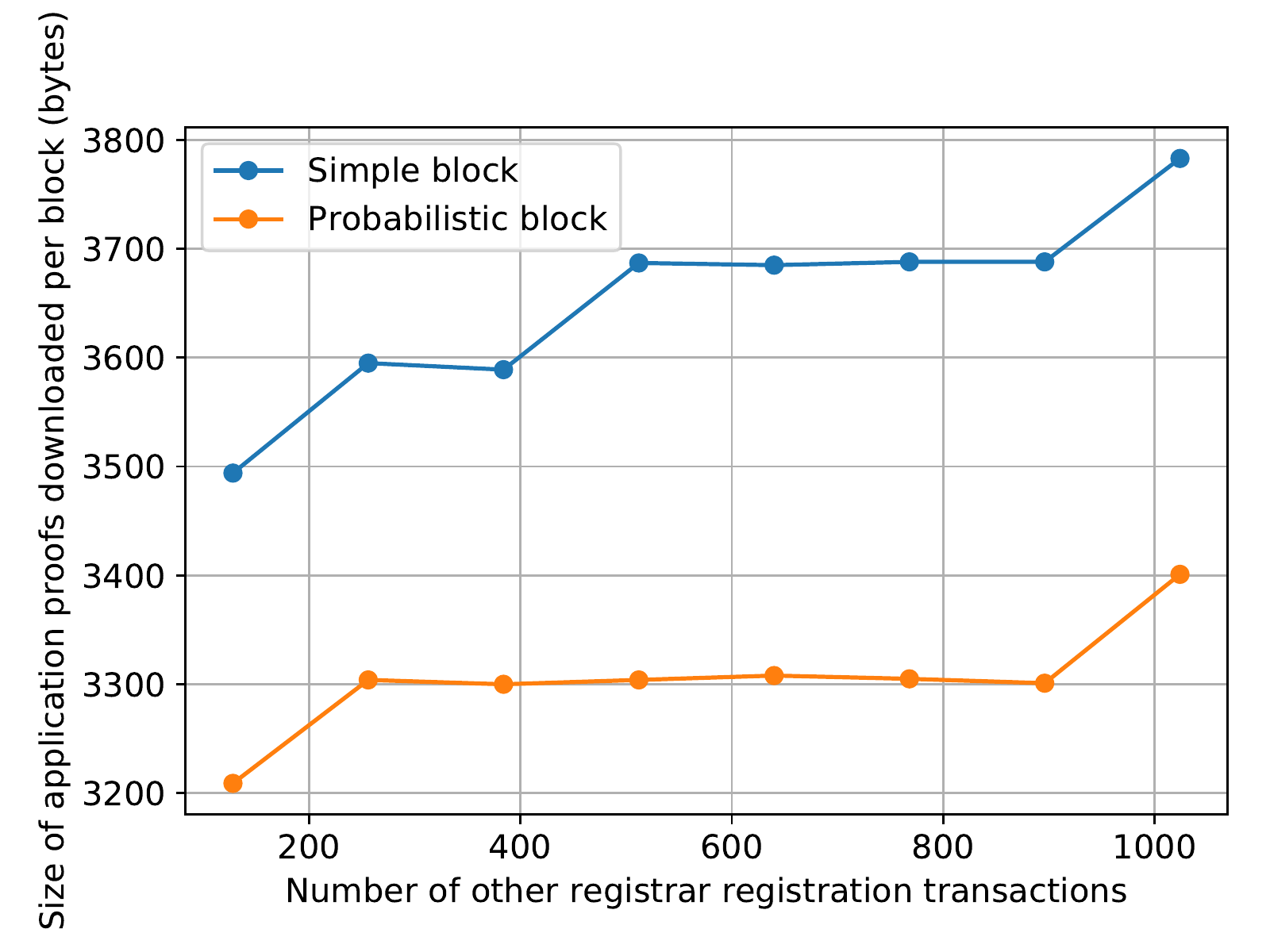}
    \caption{A graph showing the size of application proofs in a block for an instance of a registrar application with 10 registration transactions versus the number of registration transactions for other registrar application instances in the block.}
    \label{fig:plot5}
\end{figure}

\mustafa{Same graphs for throughput?}
    \section{Discussion}

\subsection*{Application Light Clients}

One of the current limitations of \sysname is that it is not obvious how to build light clients for applications, so that clients do not have to download all of an application's messages to learn the application's state. This is because application messages are not validated by consensus nodes, thus clients cannot assume that an honest majority has validated them. This may be an interesting area of future work. \alberto{I may be wrong, but it seems to me that only "generic" light-clients seem impossible (\ie light-client working for any kind of applications). But we may perhaps design \sysname light-client that work for specific applications. For example, I may use \sysname for a particular application specifying that a particular path of the namespace tree contains an accumulator of the latest state of my application; could I design a light-client for my particular application (knowing that all my clients know that particular path)? It seems to me that in \sysname, it is up the application-designer to provide a client software (since no-one else know how to interpret the application data); it might therefore not be impossible for the application-designer to design a light-client (depending on its applicaiton).\mustafa{Yeah true, I shouldn't say it's impossible, I changed it.}}
\alberto{Perhaps, other options include applications where a set of authorities (\eg the application-designers) threshold sign checkpoints, and include them on a particular address-space. Then light-clients may simply sync by downloading the latest checkpoint. (This implies a threshold assumption, which is publicly verifiable by non-light clients).}

\subsection*{Hard Forking}

One of the interesting aspects of the \sysname design is its consequences on blockchain governance, in particular hard forks. Traditionally, hard forks have been used in the past to change transaction protocol rules \cite{larson2017} or to reverse damage caused popular by smart contracts being compromised, such as the DAO hack \cite{buterin2016}.

However with \sysname, as there are no transaction-specific protocol rules and blocks may contain any arbitrary data, hard forks to change transaction rules or change the state of the system are not possible or necessary, as the interpretation of transactions are left to the device of the end-user clients rather than the consensus. Thus if users of a specific application decide they want to change the state of or `upgrade' an application, they can do so without the permission of the consensus or any on-chain effects or changes, as long as other users implement the same upgrade. Users who do not implement the upgrade will locally interpret the application to have a different state - similar to the effect of a hard-fork but without requiring one explicitly.
    \section{Related Work}\label{sec:related-work}

The namespaced Merkle tree in \sysname is inspired by the `flagged' Merkle tree concept by Crosby and Wallach \cite{crosby2009}, where each node in the tree is has a flag that represents the attributes that its leafs has.

Mastercoin (now OmniLayer) \cite{willett2012} is a blockchain application system predating Ethereum \cite{buterin2013}, which uses Bitcoin has a protocol layer for posting messages. This is similar to \sysname in the sense that the blockchain can be used to post arbitrary messages that are interpreted by clients, however in Mastercoin all nodes must download all Mastercoin messages as the Bitcoin base layer does not support efficient data availability schemes such as the Probabilistic Validity Rule. Additionally, as Mastercoin uses Bitcoin as the base layer, it does not support queries for complete sets of messages by specific applications by clients. Finally, Mastercoin has a set of hardcoded applications, and does not support arbitrary applications. In contrast, \sysname examines what an ideal new blockchain would look like for use as a base layer in a system where the base layer is only for posting messages and data availability.

\alberto{It seems that one of the novelty of \sysname is the way it decouples transaction execution from transaction ordering. To emphasise that, you may want to add this section a comparison with other systems that aim to achieve this goal (\eg. \cite{vukolic2017rethinking}, but I think there are other BFT designs that have a similar goal); a good selling point is that \sysname is different since the validation is entirely done by the clients. Also, \sysname verification is "sharded" on the clients since each client only validates a subset of transactions (this seems also novel).}
    \section{Conclusion}\label{sec:conclusion}

We have presented and evaluated \sysname, a unique blockchain design paradigm where the base layer is only used a mechanism to guarantee the availability of on-chain messages, and transactions are interpreted and executed by end-users. We have shown that by reducing block verification to data availability verification, blocks can be verified in sub-linear time. Additionally, using the notion of application state sovereignty, we have shown that multiple sovereign applications can use the same chain for data availability, with only limited impact to the workload of each other's users.
    \section*{Acknowledgements}
Mustafa Al-Bassam is supported by a scholarship from The Alan Turing Institute.
    
Thanks to George Danezis for helpful discussions about the design of \sysname, and Alberto Sonnino for comments.

    \bibliography{references}
    \bibliographystyle{ieeetr}

    \appendix
\end{document}